\newcommand{\matN}{\mathbb N}
\newcommand{\remove}[1]{}
\begin{document}

\title*{On List $k$-Coloring Convex Bipartite Graphs} 
\author{Josep D\'{\i}az,  \"{O}znur Ya\c{s}ar Diner,  Maria Serna,  and  Oriol Serra}
\institute{Josep D\'{\i}az \at ALBCOM Research Group, CS Department, Universitat Polit\`{e}cnica de Catalunya, \email{diaz@cs.upc.edu}
 \and
 \"{O}znur Ya\c{s}ar Diner \at Mathematics Department, Universitat Polit\`{e}cnica de Catalunya and 
 Computer Engineering Department, Kadir Has University, Istanbul,  \email{oznur.yasar@khas.edu.tr} 
 \and
  Maria Serna \at ALBCOM Research Group, CS Department, Universitat Polit\`{e}cnica de Catalunya, \email{mjserna@cs.upc.edu}
\and
 Oriol Serra \at  Mathematics Department, Universitat Polit\`{e}cnica de Catalunya,
 \email{oriol.serra@upc.edu} 
 }

\maketitle

\abstract{List $k$--Coloring ({\sc Li $k$-Col}) is the decision problem  asking if a given graph admits a proper coloring compatible with a given list assignment to its vertices with colors in  $\{1,2,\ldots ,k\}$. The problem is known to be NP-hard even for $k=3$ within the class of $3$--regular planar bipartite graphs and for $k=4$ within the class of chordal bipartite graphs.  In 2015 Huang, Johnson and Paulusma   asked for the complexity of {\sc Li $3$-Col} in the class of chordal  bipartite graphs. In this paper we give a partial answer to this question  by showing that {\sc Li $k$-Col} is polynomial in the class of convex bipartite graphs.  We show first that biconvex bipartite graphs admit a multichain ordering, extending the classes of graphs where a polynomial algorithm of Enright, Stewart and Tardos (2014) can be applied to the problem. We provide  a  dynamic programming algorithm to solve the {\sc Li $k$-Col} in the calss of convex bipartite graphs. Finally we show how our algorithm can be modified to solve the more general {\sc Li $H$-Col} problem on convex bipartite graphs. }

Keywords: List Coloring, Convex Bipartite, Biconvex bipartite graphs.

\section{Introduction}\label{sec:intro}

A \emph{coloring} of a a graph $G=(V, E)$ is a map $c:V\to \matN$.  A coloring is \emph{proper} if no two adjacent vertices are assigned the same color.  
 If there is a proper coloring of a graph that uses at most $k$ colors then we say that  $G$ is $k$-\emph{colorable} and that $c$ is a $k$-\emph{coloring} for $G$.
 The coloring problem {\sc Col} asks, for a given  graph $G=(V,E)$ and a positive integer $k$, whether there is a $k$-coloring for $G$. When $k$ is fixed, we have the {\sc $k$-Coloring} problem. 
 
A list assignment $L:V\to 2^{\matN}$ is a map assigning a set of positive integers to each vertex of $G$. Given $G$ and $L$, the List Coloring problem {\sc LiCol} asks for the existence of a proper coloring $c$ that obeys $L$, i.e., each vertex receives a color from its own list. If the answer is positive, $G$ is said to be $L$-\emph{colorable}.
 Variants of the problem are defined by bounding the total number of available colors or by bounding the list size. In {\sc List} $k$-{\sc Coloring} ({\sc Li $k$-Col}), $L(v)\subseteq  \{1,2,\ldots,k\}$ for each $v\in V$. Thus, there are $k$ colors in total. On the other hand in $k$-{\sc List Coloring} ({\sc $k$-LiCol}) each list $L$ has size at most $k$, in this case the total number of colors can be larger than $k$. 
 
 
Precoloring Extension, {\sc PrExt}, is a special case of {\sc LiCol} and a generalization of {\sc Col}. In {\sc PrExt} all of the vertices in a subset $W$ of $V$ are previously colored and the task is to extend this coloring to all of the vertices. If, in addition,
the total number of colors is bounded, say by $k$, then it is called the k-Precoloring Extension, {\sc $k$-PrExt}. $k$-{\sc Col} is clearly a special case of {\sc $k$-PrExt}, which in turn is a special case of {\sc Li $k$-Col}. Refer to \cite{golovach2016} for a chart summarizing these relationships. 

For general graphs {\sc Col} and its variants {\sc LiCol} and {\sc PrExt} are NP--complete; see \cite{karp1972, garey1979}. Most of their variants are NP-complete even when the parameter $k$ is fixed for small values of $k$: $k$-{\sc Col}, $k$-{\sc LiCol}, {\sc Li $k$-Col} and {\sc $k$-PrExt} are NP-complete when $k\geq 3$ \cite{lovasz1973} and they are polynomially solvable when $k\leq 2$ \cite{erdos1979, vizing1976}. 

Concerning the complexity of these problems in graph classes, {\sc Col} is solvable in polynomial time for perfect graphs \cite{grostchel} whereas {\sc LiCol} is NP-complete when restricted to perfect graphs and many of its subclasses, such as split graphs, bipartite graphs \cite{kubale1992} and interval graphs \cite{biro1992}. On the other hand {\sc LiCol} is polynomially solvable for trees, complete graphs and graphs of bounded treewidth \cite{jansen1997}. Refer to Tuza \cite{tuza1997} and more recently to Paulusma \cite{paulasma2016} for related surveys.

For small values of $k$, Jansen and Scheffler \cite{jansen1997} have shown that $3$--{\sc LiCol} is NP-complete when restricted to complete bipartite graphs and cographs, as observed in \cite{golovach2014}. Kratochv\'il and Tuza \cite{kratochvil1994} showed that $3$--{\sc LiCol} is NP-complete even if  each color appears in at most three lists, each vertex in the graph has degree at most three and the graph is planar. {\sc $3$-PrExt} is NP-complete even for $3$--regular planar bipartite graphs and for planar bipartite graphs with maximum degree $4$ \cite{chlebik}. 

For fixed $k\geq 3$, {\sc Li $k$-Col} is 
polynomially solvable for $P_5$-free graphs \cite{hoang2010}.
Note that chordal bipartite graphs contain $P_5$-free graphs but $P_6$ free graphs are incomparable with chordal bipartite graphs \cite{spinrad}. {\sc Li $3$-Col} is polynomial for $P_6$-free graphs \cite{broersma2013} and for $P_7$-free graphs \cite{bonomo2018}. Computational complexity of {\sc Li $3$-Col} for $P_8$-free bipartite graphs is open \cite{bonomo2018}. Even the restricted case of {\sc Li $3$-Col} for $P_8$-free chordal bipartite graphs is open. Golovach et. al. \cite{golovach2016} give a survey that  summarizes the results for {\sc Li $k$-Col} on $H$-free graphs in terms of the structure of $H$.

{\sc PrExt} problem is solvable in linear time on $P_5$-free graphs and it is NP-complete when restricted to $P_6$-free chordal bipartite graphs \cite{hujter1996}. {\sc $3$-PrExt} is NP-complete even for planar bipartite graphs \cite{kratochvil1993}, even for those having maximum degree 4 \cite{chlebik}. Recall that {\sc PrExt} generalizes $k$-PrExt and {\sc Li $k$-Col} generalizes {\sc $k$-PrExt}. But there is no direct relation between {\sc PrExt} and {\sc Li $k$-Col} \cite{golovach2016}.

Coloring problems can be placed in the more general class of $H$--coloring problems. Given two graphs $G$ and $H$, a function $f : V(G) \rightarrow V(H)$ such that $f(u)$ and $f(v)$ are adjacent in $H$ whenever $u$ and $v$ are
adjacent in $G$ is called a graph homomorphism from $G$ to $H$. For a fixed graph $H$ and for an input $G$, the $H$-coloring problem, {\sc $H$-Col}, asks whether there is a $G$ to $H$ homomorphism. In the list $H$-coloring problem, {\sc Li $H$-Col}, each vertex of the input graph $G$ is associated with a list of vertices of $H$ and the question is whether a $G$ to $H$ homomorphism exists that maps each vertex to a member of its list. Observe that {\sc Li $H$-Col} is a generalization of {\sc Li $k$-Col}. The complexities of
the $H$--coloring and list $H$--coloring problems for arbitrary input graphs are completely
characterized in terms of the structure of $H$, see Ne\v set\v ril and Hell \cite{nesetril}.

Although intensive research on this subject has been undertaken in the last two decades, there are still numerous open questions regarding computational complexities on {\sc LiCol} and its variants when they are restricted to certain graph classes. Huang, Johnson and Paulusma \cite{huang2015} proved that {\sc Li $4$-Col} is NP-complete for $P_8$--free chordal bipartite graphs and $4$--{\sc PrExt} is NP-complete for $P_{10}$--free chordal bipartite graphs. They further pose the problem on the computational complexity of the {\sc Li $3$-Col} and {\sc $3$-PrExt} on chordal bipartite graphs. Here {\sc Li $k$-Col} and {\sc $k$-PrExt} on convex bipartite graphs, a proper subclass of chordal bipartite graphs, are studied and a partial answer to this question is given. Figure \ref{chart}  summarizes the related results.

 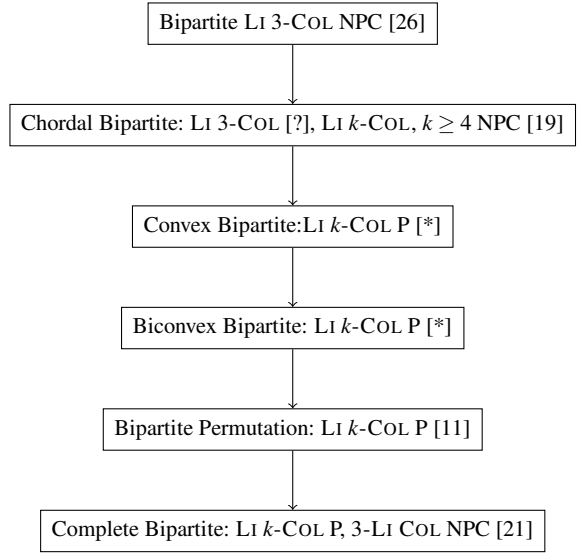
\begin{figure}
  \begin{center}
  \begin{tikzpicture}[scale=0.9]
  \node[draw] (CB) at (2,2) {
  \begin{tabular}{c}Complete Bipartite: {\sc Li $k$-Col} P, {\sc $3$-Li Col} NPC  \cite{jansen1997}\end{tabular}};
 \node[draw] (BP) at (2,3.5) {
  \begin{tabular}{c} Bipartite Permutation: {\sc Li $k$-Col} P  \cite{enright2014}\end{tabular}};
   \node[draw] (BC) at (2,5) {
  \begin{tabular}{c} Biconvex Bipartite: {\sc Li $k$-Col} P  [*]\end{tabular}};
  
  \node[draw] (CxB) at (2,6.5) {
  \begin{tabular}{c} Convex Bipartite:{\sc Li $k$-Col} P [*]\end{tabular}};
  
  \node[draw] (ChB) at (2,8) {
  \begin{tabular}{c} Chordal Bipartite: {\sc Li $3$-Col} [?], {\sc Li $k$-Col, $k\geq 4$} NPC \cite{huang2015}\end{tabular}};
  
  \node[draw] (PB) at (2,9.5) {
  \begin{tabular}{c} Bipartite {\sc Li $3$-Col} NPC   \cite{kubale1992}  \end{tabular}};
  
  \path (BP) edge[->]   (CB);
  \path (BP) edge[<-] (BC);
  \path (BC) edge[<-] (CxB);
  \path (CxB) edge[<-] (ChB);
  \path (ChB) edge[<-] (PB);
  \end{tikzpicture}
 \end{center}
  \caption{Chart for known complexities for {\sc LiCol} and its variants for chordal bipartite graphs and its subclasses, for $k\geq 3$. The complexity results marked with [*] is the topic of this paper, while [?] stands for open cases. Results without reference are trivial. P stands for Polynomial and NPC for NP-complete.}
  \label{chart}
 
  \end{figure}

A  bipartite graph $G=(X\cup Y, E)$ is convex if it admits an ordering on one of the parts of the bipartition, say $X$,  such  that the neighbours of each vertex in $Y$ are consecutive in this order. If both color classes admit such an ordering the graph is called {biconvex bipartite} (see Section \ref{sec:prelim} for  formal definitions). Chordal bipartite graphs contain convex bipartite graphs properly. Convex bipartite graphs contain as a proper subclass biconvex bipartite graphs which contain bipartite permutation graphs properly. More information on these classes can be found in Spinrad \cite{spinrad} and in Brandst\"{a}dt, Le and  Spinrad \cite{brandstadt}.
 
Enright, Stewart and Tardos \cite{enright2014} have shown that {\sc Li $k$-Col} is solvable in polynomial time when restricted to graphs with all connected induced subgraphs having a multichain ordering. They apply this result to  permutation graphs and  interval graphs. Here we show that connected biconvex graphs also admit a multichain ordering, implying a  polynomial time algorithm for {\sc Li $k$-Col} on this graph class.  

From the point of view of parameterized complexity, treewidth can be computed in polynomial time on chordal bipartite graphs \cite{kloks1993}. {\sc Li $k$-Col} can be solved in polynomial time  on chordal bipartite graphs with bounded treewidth \cite{jansen1997, diaz2002} which includes chordal bipartite graphs of bounded degreee \cite{lozin}. {\sc Li $k$-Col} is polynomial for graphs of bounded cliquewidth \cite{courcelle}. Note that convex bipartite graph contains graphs with unbounded treewidth as well as graphs with unbounded cliquewidth. 

The paper is organized as follows. In Section \ref{sec:prelim} we give the necessary definitions. In Section \ref{sec:biconvex} we show that connected biconvex bipartite graphs admit  multichain ordering. In Section \ref{sec:convex}, we show that {\sc Li $k$-Col} is polynomially solvable when it is restricted to convex bipartite graphs. Then we show how to extend this result to {\sc Li $H$-Col}.  


\section{Preliminaries}\label{sec:prelim}

We consider finite simple graphs $G=(V,E)$. For terminology refer to Diestel \cite{diestel}.

%
%

An edge joining non adjacent vertices in the cycle, $C_n$, is called a {\it chord}. A graph $G$ is {\it chordal} if every induced cycle of length $n\geq 4$ has a chord. {\it Chordal bipartite graphs} are bipartite graphs in which every induced $C_n, n\geq 6$ has a chord. This graph class is introduced by Golumbic and Gross \cite{golumbic1978}. Chordal bipartite graphs may contain induced $C_4$, so they do not constitute a subclass of chordal graphs but it is a proper subclass of bipartite graphs. Chordal bipartite graphs can be recognized in polynomial time \cite{paige1987}.

A bipartite graph is represented by $G = ( X\cup Y, E)$, where $X$, $Y$ form a bipartition of the vertex set into stable sets. An ordering of the vertices $X$ in a bipartite graph $G = ( X\cup Y, E)$ has the {\it adjacency property} (or the ordering is said to be {\it convex}) and $G$ is said to have {\it convexity with respect to $X$} if, for each vertex $v \in Y$,  $N(v)$ consists of vertices which are consecutive in the ordering of $X$. We say that an ordering of the vertices $X$ in a bipartite graph $G = ( X\cup Y, E)$ has the {\it enclosure property} if for every pair of vertices $u, v \in Y$ such that $N(u) \subseteq N(v)$, the vertices in $N(v)\backslash N(u)$ occur consecutively in the ordering of $X.$

\begin{figure}[t]
\begin{center}
\begin{tikzpicture}[scale=0.8,dot/.style={draw,circle,minimum size=1mm,inner sep=0pt,outer sep=0pt}]

\draw (2.5,2.5) node {$y_1$};
\draw (5.5,2.5) node {$y_2$};
\draw (7.5,2.5) node {$y_3$};
\draw (9,2.5) node {$y_4$};

\draw (0,2.5) node {$Y$};
\draw (0,0) node {$X$};

\coordinate [dot,fill =black] (y1) at (2.5,2);
\coordinate [dot,fill =black] (y2) at (5.5,2);
\coordinate [dot,fill =black](y3) at (7.5,2);
\coordinate [dot,fill =black] (y4) at (9,2);

\coordinate [dot,fill =black] (x1) at (1,0.5);
\coordinate [dot,fill =black] (x2) at (2,0.5);
\coordinate [dot,fill =black](x3) at (3,0.5);
\coordinate [dot,fill =black] (x4) at (4,0.5);
\coordinate [dot,fill =black] (x5) at (5,0.5);
\coordinate [dot,fill =black] (x6) at (6,0.5);
\coordinate [dot,fill =black](x7) at (7,0.5);
\coordinate [dot,fill =black](x8) at (8,0.5);
\coordinate [dot,fill =black](x9) at (9,0.5);

\draw (1,0) node {$x_1$};
\draw (2,0) node {$x_2$};
\draw (3,0) node {$x_3$};
\draw (4,0) node {$x_4$};
\draw (5,0) node {$x_5$};
\draw (6,0) node {$x_6$};
\draw (7,0) node {$x_7$};
\draw (8,0) node {$x_8$};
\draw (9,0) node {$x_9$};

\path (y1) edge (x1);
\path (y1) edge (x2);
\path (y1) edge (x3);
\path (y1) edge (x4);
\path (y1) edge (x5);

\path (y2) edge (x4);
\path (y2) edge (x5);
\path (y2) edge (x6);
\path (y2) edge (x7);

\path (y3) edge (x7);
\path (y3) edge (x8);

\path (y4) edge (x5);
\path (y4) edge (x6);
\path (y4) edge (x7);
\path (y4) edge (x8);
\path (y4) edge (x9);

\end{tikzpicture}
\end{center}
\caption{A convex bipartite graph which is not biconvex.\label{fig:convex-bgraph}}
\end{figure}
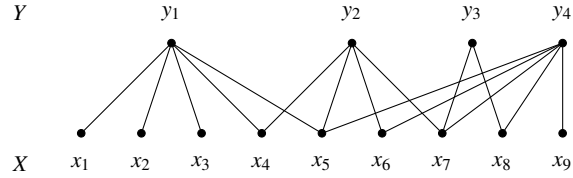

{\it Convex bipartite graphs} are bipartite graphs $G = ( X\cup Y, E)$ that have the adjacency property on one of the partite sets and {\it biconvex bipartite graphs} have the the adjacency property on both partite sets $X$ and $Y$. Fig. ~\ref{fig:convex-bgraph} shows a graph that is convex but not biconvex. {\it Bipartite permutation graphs} are biconvex bipartite graphs in which one of the partite sets obeys both the adjacency and the enclosure properties. There are linear time recognition algorithms for these classes \cite{spinrad1987, nussbaum2010}. 

A {\it chain graph}  is a bipartite graph that contains no induced $2K_2$ (a graph formed by two independent edges)  \cite{yannakakis1982}. The following characterization from \cite{enright2014} is equivalent: a connected bipartite graph with bipartite sets $X$ and $Y$ is a chain graph if and only if for any two vertices $y_1, y_2 \in Y$ we have  $N(y_1) \subseteq N(y_2)$ or $N(y_2)\subseteq N(y_1)$. If the vertices in $X$ are ordered with respect to their degrees starting from the highest degree, then for any $y\in Y$, the vertices in $ N(y)$ will be consecutive in the ordering on $X$ and, if the graph is connected, there is always a vertex $y\in Y$ so that   $N(y)$ includes the first vertex in $X$. In particular, chain graphs are a proper subclass of convex bipartite graphs. 

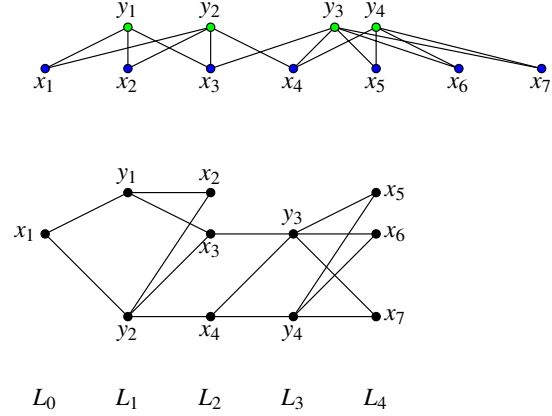
\begin{figure}[t] 
\begin{center}
\scalebox{1.1}{%
\begin{tikzpicture}[dot/.style={draw,circle,minimum size=1mm,inner sep=0pt,outer sep=0pt}]
\coordinate [dot,fill = blue] (X1) at (0,3);
\coordinate [dot,fill = blue] (X2) at (1,3);
 \coordinate [dot,fill = blue] (X3) at (2,3);
 \coordinate  [dot,fill = blue](X4) at (3,3);
\coordinate  [dot,fill = blue](X5) at (4,3);
\coordinate [dot,fill = blue] (X6) at (5,3);
\coordinate [dot,fill = blue] (X7) at (6,3);
\coordinate[dot,fill = green] (Y1) at (1,3.5);
\coordinate [dot,fill = green] (Y2) at (2,3.5);
\coordinate [dot,fill = green] (Y3) at (3.5,3.5);
\coordinate [dot,fill = green]  (Y4) at (4,3.5);
\coordinate [dot,fill = black] (LX1) at (0,1);
\coordinate [dot,fill = black]  (LX2) at (2,1.5);
\coordinate [dot,fill = black]  (LX3) at (2,1);
\coordinate [dot,fill = black]  (LX4) at (2,0);
\coordinate [dot,fill = black]  (LX5) at (4,1.5);
\coordinate [dot,fill = black]  (LX6) at (4, 1);
\coordinate [dot,fill = black]  (LX7) at (4, 0);
\coordinate [dot,fill = black]  (LY1) at (1,1.5);
\coordinate [dot,fill = black]  (LY2) at (1, 0);
\coordinate [dot,fill = black]  (LY3) at (3,1);
\coordinate [dot,fill = black]  (LY4) at (3, 0);

\draw (0,-1) node {$L_0$};
\draw (1,-1) node {$L_1$};
\draw (2,-1) node {$L_2$};
\draw (3,-1) node {$L_3$};
\draw (4,-1) node {$L_4$};

\draw [black] (X1) -- (Y1)-- (X2) -- (Y2)-- (X3) -- (Y3) -- (X4)-- (Y4) -- (X5) -- (Y3) -- (X6)-- (Y4)-- (X7);
\draw [black] (X1) -- (Y2) -- (X4);
\draw [black] (X3) -- (Y1);
\draw [black] (X7) -- (Y3);
\draw [black] (LX1) -- (LY1)-- (LX3)--(LY3) --(LX7);
\draw [black] (LX1) -- (LY2)-- (LX2);
\draw [black] (LX2) -- (LY1);
\draw [black] (LY2) -- (LX4)-- (LY4) -- (LX6) -- (LY3) -- (LX4);
\draw [black] (LY2) -- (LX3);
\draw [black] (LY4) -- (LX7);
\draw [black] (LY4) -- (LX5) -- (LY3);
\node[left] at (LX1) {\footnotesize{$x_1$}};
\node[right] at (LX5) {\footnotesize{$x_5$}};
\node [right] at (LX6) {\footnotesize{$x_6$}};
\node [right] at (LX7) {\footnotesize{$x_7$}};
\node [below] at (LY2) {\footnotesize{$y_2$}};
\node [below] at (LX4) {\footnotesize{$x_4$}};
\node [below] at (LX3) {\footnotesize{$x_3$}};
\node [below] at (LY4) {\footnotesize{$y_4$}};
\node [above] at (LY1) {\footnotesize{$y_1$}};
\node [above] at (LX2) {\footnotesize{$x_2$}};
\node [above] at (LY3) {\footnotesize{$y_3$}};
\node [above] at (Y1) {\footnotesize{$y_1$}};
\node [above] at (Y2) {\footnotesize{$y_2$}};
\node [above] at (Y3) {\footnotesize{$y_3$}};
\node [above] at (Y4) {\footnotesize{$y_4$}};
\node [below] at (X1) {\footnotesize{$x_1$}};
\node [below] at (X2) {\footnotesize{$x_2$}};
\node [below] at (X3) {\footnotesize{$x_3$}};
\node [below] at (X4) {\footnotesize{$x_4$}};
\node [below] at (X5) {\footnotesize{$x_5$}};
\node [below] at (X6) {\footnotesize{$x_6$}};
\node [below] at (X7) {\footnotesize{$x_7$}};

\end{tikzpicture}}
\end{center}
\caption{A convex bipartite graph and its associated distance layers from $x_1$.}\label{fig:corner}
\end{figure}

\section{List $k$-Coloring on Biconvex Graphs} \label{sec:biconvex}

Enright, Stewart and Tardos \cite{enright2014} show that {\sc Li $k$-Col}, as well as  the general {\sc Li $H$-Col}, is solvable in polynomial time when restricted to graphs with all connected induced subgraphs having a multichain ordering. They apply this result to  permutation graphs and  interval graphs. Here we show that connected biconvex graphs also admit a multichain ordering.

The {\it distance layers} of a connected graph $G= (V, E)$ from a vertex $v_0$ are $L_0, L_1, . . . , L_z,$ where $L_0=\{v_0\}$ and, for $i>0$, $L_i$ consists of the vertices at distance $i$ from $v_0$ and $z$ is the largest integer for which this set is non-empty (see Figure \ref{fig:corner} for an example).  These layers form a {\it multi-chain ordering} \cite{brandstadt2003} of $G$ if, for every two consecutive layers $L_i$ and $L_{i+1}$, the edges connecting these two layers form a chain graph (not necessarily the layers themselves). All connected bipartite permutation graphs \cite{brandstadt2003} and interval graphs \cite{enright2014} admit multichain orderings. Observe that, for  the  graph given in Fig. \ref{fig:corner}, the distance layers from $x_1$ provide a multichain ordering.

Recall that a subdivision  of a graph $G$ is the graph $G'=subd(G)$ obtained from $G$ by replacing each edge by a path of length two. Thus $|E(G')|=2|E(G)|$ and $|V(G')|=|V(G)|+|E(G)|$.

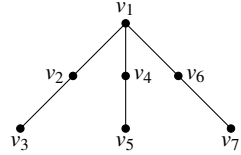
\begin{figure}[t]
\begin{center}
\begin{tikzpicture}[scale=0.7]
\draw[fill] (0,2) circle(2pt);
\foreach \i in {-1,0,1}
{
\draw[fill] (\i,1) circle(2pt);
\draw (0,2)--(\i,1);
}
\foreach \i in {-2,0,2}
{
\draw[fill] (\i,0) circle(2pt);
}
\draw (-1,1)--(-2,0) (0,1)--(0,0) (1,1)--(2,0);
\node[above] at (0,2) {$v_1$};
\node[left] at (-1,1) {$v_2$};
\node[right] at (0,1) {$v_4$};
\node[right] at (1,1) {$v_6$};
\node[below] at (-2,0) {$v_3$};
\node[below] at (0,0) {$v_5$};
\node[below] at (2,0) {$v_7$};
\end{tikzpicture}
\end{center}
\caption{Subdivision of $K_{1,3}$.}\label{fig:H}
\end{figure}

\begin{lemma}\label{claim:H}
If $G$ is a biconvex graph, then $G$ does not contain $subd(K_{1, 3})$ as an induced subgraph.
\end{lemma}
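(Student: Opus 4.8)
The plan is a proof by contradiction. Suppose $G=(X\cup Y,E)$ is biconvex and contains an induced copy of $\mathrm{subd}(K_{1,3})$; label its vertices as in Figure~\ref{fig:H}, so that $v_1$ is the degree‑$3$ centre and the three legs are $v_1v_2v_3$, $v_1v_4v_5$, $v_1v_6v_7$. First I would observe that $\mathrm{subd}(K_{1,3})$ is connected and bipartite with the unique bipartition $\{v_1,v_3,v_5,v_7\}\cup\{v_2,v_4,v_6\}$; since $G$ is bipartite, any induced embedding must map these two classes into the two sides $X,Y$. Biconvexity provides a convex ordering of \emph{each} side, so we may assume $v_1,v_3,v_5,v_7\in X$ and $v_2,v_4,v_6\in Y$, and fix a convex ordering $<$ of $X$. (If the embedding is the other way round we run the identical argument with $X$ and $Y$ interchanged; this is the only place where the ``bi'' in biconvex is used, mere convexity being insufficient.)

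Next I would extract the constraints coming from convexity. For each $i\in\{2,4,6\}$ the neighbourhood $N(v_i)\subseteq X$ is an interval of $(X,<)$. Because the copy is induced, $v_1,v_3\in N(v_2)$ while $v_5,v_7\notin N(v_2)$, and symmetrically $v_1,v_5\in N(v_4)$ with $v_3,v_7\notin N(v_4)$, and $v_1,v_7\in N(v_6)$ with $v_3,v_5\notin N(v_6)$. Since an interval that contains two elements contains everything between them, this says: in the order $<$, for each leg the two leaves belonging to the \emph{other} legs do not lie between $v_1$ and that leg's own leaf $v_{2j+1}$.

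The heart of the argument is then a pigeonhole step. The leaves $v_3,v_5,v_7$ are three distinct vertices of $X$, all different from $v_1$, so at least two of them lie on the same side of $v_1$ in the order $<$; among those two let $u$ be the one nearer to $v_1$ and $w$ the one farther, so that $u$ lies strictly between $v_1$ and $w$. Writing $w=v_{2j+1}$ for the corresponding leg index $j$, the set $N(v_{2j})$ is an interval containing both $v_1$ and $w$, hence it contains $u$; but $u$ is a leaf of a different leg, so $u\notin N(v_{2j})$ — a contradiction. This contradiction proves the lemma.

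I do not expect a genuine obstacle: the argument is very short once the set‑up is in place. The one point that must be stated carefully is the opening reduction — that an induced $\mathrm{subd}(K_{1,3})$ forces its degree‑$1$/degree‑$3$ colour class entirely onto one side of the bipartition, so that the convex ordering of that side applies — together with the observation that it is precisely the \emph{induced} hypothesis that supplies the non‑adjacencies ($v_2\not\sim v_5$, $v_4\not\sim v_3$, etc.) which make the interval argument bite.
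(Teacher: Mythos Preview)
Your proposal is correct and follows essentially the same approach as the paper: both argue that the three neighbourhoods $N(v_2)=\{v_1,v_3\}$, $N(v_4)=\{v_1,v_5\}$, $N(v_6)=\{v_1,v_7\}$ cannot simultaneously be intervals in a linear order on the side containing $v_1$. The paper merely asserts this impossibility in one line, whereas you spell out the pigeonhole step and are more careful about why the bipartition of the induced copy aligns with that of $G$ and why the \emph{induced} hypothesis is needed---so your write-up is in fact more complete than the paper's own proof.
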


\begin{proof}
Let $G$ be a biconvex graph and let $H=subd(K_{1, 3})$. Let $v_1$ be the vertex of degree $3$ in $H$,  $v_2, v_4$ and $v_6$ be the vertices in $N(v_1)$ and  $v_3, v_5$ and $v_7$ the vertices of degree $1$ so that $v_i$ is adjacent to $v_{i+1}$ for $i=2, 4 , 6$, see Fig. \ref{fig:H}.

We observe that there is no ordering of $\{v_1,\ldots ,v_7\}$ in which the three sets $N(v_2)=\{v_1, v_3\}$, $N(v_4)=\{v_1, v_5\}$ and $N(v_6)=\{v_1, v_7\}$ become consecutive. Therefore, a bipartite graph which contains $H$ as an induced subgraph does not admit a biconvex ordering.
\end{proof}


\begin{proposition}\label{prop:multichain} Every connected biconvex graph admits a multichain ordering.
\end{proposition}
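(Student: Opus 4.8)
The plan is to fix a connected biconvex graph $G=(X\cup Y,E)$, pick a root vertex $v_0$ that lies on one extreme of a convex ordering (say $v_0=x_1$, the first vertex in a convex ordering of $X$), and build the distance layers $L_0,L_1,\dots,L_z$ from $v_0$. Since $G$ is bipartite, each layer $L_i$ is contained entirely in $X$ or entirely in $Y$, with parity alternating. I must show that for every $i$ the bipartite graph $G_i$ formed by the edges between $L_i$ and $L_{i+1}$ is a chain graph, i.e.\ contains no induced $2K_2$. Equivalently, I want: there are no vertices $a,a'\in L_i$ and $b,b'\in L_{i+1}$ with $ab,a'b'\in E$ but $ab',a'b\notin E$.

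First I would record the consequence of Lemma~\ref{claim:H}: $G$ has no induced $subd(K_{1,3})$. The key geometric observation is that an induced $2K_2$ sitting between two \emph{consecutive} distance layers, together with shortest paths from $v_0$ back to the four vertices, tends to force a $subd(K_{1,3})$ (or can be completed to one by choosing the branch vertex appropriately among the common ancestors). So the main step is a careful case analysis: given a putative induced $2K_2$ with edges $ab$ and $a'b'$ between $L_i$ and $L_{i+1}$, follow shortest paths from $a,a'$ (the vertices in the lower layer, without loss of generality) up toward $v_0$; they merge at some vertex $w$ in an earlier layer. Walking down from $w$, one gets two internally disjoint paths reaching $a$ and $a'$, which then extend via $b$ and $b'$; choosing the last common vertex and pruning chords, I expect to extract either an induced $subd(K_{1,3})$ (contradicting Lemma~\ref{claim:H}) or a shorter even cycle with no chord of the right type, contradicting that $G$ is chordal bipartite (biconvex graphs are chordal bipartite). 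The delicate point is handling chords: edges of $G$ inside or across layers that are not part of the shortest-path structure, which may prevent the three "legs" from being induced; here biconvexity is used a second time, since the convex order on $X$ (and on $Y$) constrains which such chords can occur, essentially because the neighborhoods within a layer are intervals.

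An alternative, possibly cleaner route that I would try in parallel: argue directly from the convex orderings rather than going through the subdivision obstruction. Order $X$ by a convex ordering $x_1<x_2<\cdots$, and note that each layer $L_i\cap X$ is a set of $x$'s, while membership in $L_{i+1}$ (an $X$-layer two steps further, or the $Y$-layer adjacent) is governed by intervals of neighbors. One shows that within consecutive layers the "is adjacent to" relation respects the linear order in the chain-graph sense: if $a<a'$ in $X$ both lie in $L_i$ and $a'$ sees $b\in L_{i+1}$, then $a$ sees $b$ as well, or symmetrically — which is exactly the chain condition $N(a)\subseteq N(a')$ or $N(a')\subseteq N(a)$ restricted to the next layer. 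Establishing this monotonicity is where convexity on \emph{both} sides (not just one) is essential, because one needs intervals on the $Y$-side to propagate the nesting of neighborhoods back to the $X$-side across two steps of the BFS.

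The main obstacle, in either approach, is controlling the interaction between the BFS layer structure and the interval structure simultaneously: the convex orderings are a global property of $X$ and $Y$, but the layers are defined by distances, and there is no a priori reason the convex order restricted to a layer should be "compatible" with adjacency into the next layer. I expect the crux lemma to be a statement of the form: for consecutive layers $L_i$ (in $X$, say) and $L_{i+1}$ (in $Y$), if $b\in L_{i+1}$ then $N(b)\cap L_i$ is an interval of the convex order on $X$, \emph{and} these intervals are linearly ordered by inclusion-or-disjointness in the way chain graphs require — and the proof of that will lean on Lemma~\ref{claim:H} to rule out the "crossing intervals with private endpoints" configuration, since such a crossing is precisely what produces a $subd(K_{1,3})$ once one attaches the BFS paths. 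Once the crux lemma is in hand, the proposition follows immediately by definition of multichain ordering.
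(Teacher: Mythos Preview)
Your high-level strategy --- root the BFS at an extreme vertex, take distance layers, and use the $subd(K_{1,3})$ obstruction from Lemma~\ref{claim:H} to rule out a $2K_2$ between consecutive layers --- is the right one and matches the paper. However, two concrete ingredients that make the argument go through are missing from your plan, and they are precisely what resolve the ``chord'' difficulty you flag.

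First, the paper does not trace shortest paths all the way back to $v_0$. Instead it runs an induction on the layer index and looks only \emph{one} layer back. Given a $2K_2$ on $u,u'\in L_i$, $v,v'\in L_{i+1}$ (with $i>3$), either $u,u'$ share a neighbour $w\in L_{i-1}$ or they do not. If they do, pick predecessors $w'\in L_{i-2}$ and $w''\in L_{i-3}$ of $w$; then $\{w'',w',w,u,u',v,v'\}$ is an induced $subd(K_{1,3})$ --- every conceivable extra edge is killed automatically by bipartiteness and by the layer distances, so there are no chords to prune. If they do not, then any predecessors $w\in N(u)\cap L_{i-1}$ and $w'\in N(u')\cap L_{i-1}$ give a $2K_2$ between $L_{i-1}$ and $L_i$, contradicting the inductive hypothesis. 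This two-case split is what replaces your ``choose the last common vertex and prune chords'' step, which as you suspected is where a direct path-tracing argument gets stuck.

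Second, the induction needs a base: the $subd(K_{1,3})$ trick requires $i\ge 3$ layers behind you. The paper handles the small cases not via convexity of $X$ alone but by invoking the \emph{biconvex straight ordering} of Abbas and Stewart: every biconvex graph admits a linear order on $X\cup Y$ such that whenever two edges cross, at least one of the straight pairs is also an edge. Taking $v_0$ to be the first vertex of \emph{that} ordering (not simply $x_1$ in a convex ordering of $X$), the straight property forces $N(v_{i_1})\subseteq\cdots\subseteq N(v_{i_\ell})$ within $L_1$, giving the chain between $L_1$ and $L_2$; a short separate argument using biconvexity directly handles $L_2,L_3$. Your ``alternative, cleaner route'' is groping toward exactly this, but without the straight-ordering tool the monotonicity you want between BFS layers and the convex orders is hard to pin down.
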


\begin{proof}

%

To see that biconvex graphs admit a multichain ordering, we use the notion of biconvex straight ordering  introduced by Abbas and Stewart \cite{stewart2000}. Let $G=(X, Y, E)$ be a bipartite graph with a linear ordering $\le $ defined on $X\cup Y$. Two edges $xy, x'y'\in E$, where $x,x'\in X$ and $y,y'\in Y$, are said to {\it cross} if $x<x'$ and $y>y'$. If $xy$ and $x'y'$ cross, we call  $(x,y')$ and $(x',y)$ the {\it corresponding straight pairs}. 
An ordering on $X\cup Y$  is a {\it straight ordering} if, for each pair $xy, x'y'$ of crossing edges, at least one of the corresponding straight pairs, $(x,y')$ or $(x',y)$,   is an edge of the graph \cite{stewart2000}. 

Let $G=(X, Y, E)$ be a connected biconvex graph. It follows from \cite[Theorem 11]{stewart2000} that $G$ admits a biconvex straight ordering, say $v_0, v_1,\ldots ,v_n$ of $X\cup Y$. Let $L_0=\{v_0\}, L_1, . . . , L_m$ be the distance layers of $G$ from $v_0$. Since the graph $G$ is connected, $V=L_0\cup L_1\cup \cdots \cup L_m$. Let us show  that these layers form a multi--chain ordering. 

The first layers $L_0$ and $L_1$ trivially form a multi--chain ordering. Let $L_1=\{v_{i_1},\cdots ,v_{i_\ell}\}$, where the vertices are listed according to the ordering. When $\ell=1$,  $L_1, L_2$ trivially form a chain graph. When $\ell>1$, since the ordering is straight, all the  edges joining $v_{i_1}$ with vertices in $L_2$ cross with the edge $v_0v_{i_2}$. As, $v_{i_2}$ is not connected to $v_0$, the other straight pair $(v_{i_1}, v_{i_2)}$ should be an edge in $G$.  Therefore $N(v_{i_1})\subseteq N(v_{i_2})$.  By iterating the same argument, we see that $N(v_{i_1})\subseteq \cdots \subseteq N(v_{i_\ell})$. Thus the layers $L_0, L_1, L_2$ form a multi--chain ordering. 

When $m=3$, let us show that $L_0, L_1,L_2,L_3$ form a multichain ordering. When $|L_3|=1$,  trivially  $L_0, L_1,L_2,L_3$ form a multichain ordering. Otherwise, assume that $L_1=\{v_{i_1},\cdots ,v_{i_\ell}\}$ and, for a contradiction, that the bipartite graph induced by $L_2\cup L_3$ contains an induced copy of $2K_2$, say with edges $uv, u'v'$ with $u,u'\in L_2$ and $v,v'\in L_3$. Since the ordering is straight, we may assume that $u<u'$ and $v<v'$. Since $u,u'\in N(v_{i_\ell})$, $v<v'$ and the ordering is biconvex, we must have $v'<v_{i_\ell}$. But then $N(u)$ contains $v, v_{i_\ell}$ but not $v'$, contradicting the biconvexity of the ordering. Thus $L_0, L_1, L_2, L_3$ form a multi--chain ordering.

Suppose that $m > 3$. 
 Let $i > 3$ be the largest subscript such that  $L_0,L_1,\ldots ,L_i$ form a multichain ordering. Suppose for a contradiction that $i<m$, Thus the bipartite graph induced by the layers $L_i, L_{i+1}$ contain an induced copy of $2K_2$, say with edges  $uv, u'v'$, $u,u'\in L_i$ and $v,v'\in L_{i+1}$.  As the ordering is straight, we may assume $u<u'$ and $v<v'$. We consider two cases:

{\it Case 1}: $N(u)\cap N(u')\cap L_{i-1}\neq\emptyset$. Let $w\in N(u)\cap N(u')\cap L_{i-1}$ and consider predecessors $w'\in L_{i-2}, w''\in L_{i-3}$ of $w$. Then the subgraph induced by $w,w',w'',u,u',v,v'$ is isomorphic to a subdivision $H$ of $K_{1,3}$, contradicting Lemma \ref{claim:H}.

{\it Case 2}: $N(u)\cap N(u')\cap L_{i-1}=\emptyset$. Let $w\in N(u)\cap L_{i-1}$ and $w'\in N(u')\cap L_{i-1}$ be some predecessors of $u$ and $u'$ in the previous layer. Observe that the two edges $wu, w'u'$ induce a $2K_2$ in the subgraph induced by  $L_{i-1}\cup L_i$ contradicting the choice of $i$. 
\end{proof}

Proposition \ref{prop:multichain} and the main result by Enright, Stewart and Tardos \cite[Theorem 2.1]{enright2014} give us our main result in this section.

\begin{theorem} For any  $H$, {\sc Li $H$-Col} is solvable in polynomial time when restricted to biconvex graphs.
\end{theorem}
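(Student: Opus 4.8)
The plan is to derive the theorem by combining Proposition \ref{prop:multichain} with the algorithmic result of Enright, Stewart and Tardos \cite[Theorem 2.1]{enright2014}, which guarantees that {\sc Li $H$-Col} is solvable in polynomial time on any graph class all of whose connected induced subgraphs admit a multichain ordering. Thus the entire task reduces to verifying that the biconvex graphs form such a class, and the only genuinely needed observation beyond Proposition \ref{prop:multichain} is that biconvexity is preserved under taking induced subgraphs.

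First I would record that the class of biconvex bipartite graphs is hereditary. If $G=(X\cup Y,E)$ is biconvex with convex orderings $\le_X$ on $X$ and $\le_Y$ on $Y$, and $G'$ is the subgraph induced by $X'\cup Y'$ with $X'\subseteq X$ and $Y'\subseteq Y$, then the restrictions of $\le_X$ and $\le_Y$ to $X'$ and $Y'$ are again convex: a block of consecutive elements of $\le_X$ restricts to a block of consecutive elements of the induced order, and $N_{G'}(y)=N_G(y)\cap X'$ for every $y\in Y'$, so each such neighbourhood stays consecutive; the symmetric statement holds for vertices of $X'$. Hence every induced subgraph of a biconvex graph is biconvex.

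Consequently every connected induced subgraph of a biconvex graph is a connected biconvex graph and, by Proposition \ref{prop:multichain}, admits a multichain ordering. Applying \cite[Theorem 2.1]{enright2014} to the class of biconvex graphs then gives the theorem; a disconnected input is dealt with by solving {\sc Li $H$-Col} independently on each connected component. I do not anticipate a real obstacle here: the one point that needs care is matching the hypothesis of the cited theorem — a multichain ordering on \emph{all connected induced subgraphs}, not merely on the input graph itself — which is exactly what the hereditary argument together with Proposition \ref{prop:multichain} supplies.
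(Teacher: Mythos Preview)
Your proposal is correct and follows the same approach as the paper: the theorem is obtained directly by combining Proposition~\ref{prop:multichain} with \cite[Theorem~2.1]{enright2014}. You are in fact slightly more careful than the paper, which leaves implicit the hereditary nature of biconvexity needed to meet the ``all connected induced subgraphs'' hypothesis of the cited result.
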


As  {\sc Li $k$-Col} is a particular case of   {\sc Li $H$-Col} and {\sc Li $k$-Col} generalizes {\sc $k$-PrExt}, we have the following corollary.

\begin{corollary}\label{cor1}
{\sc Li $k$-Col} and {\sc $k$-PrExt} are solvable in polynomial time when restricted to biconvex graphs.
\end{corollary}

Concerning the running time of the algorithms, it is shown in Abbas and Stewart \cite{stewart2000} that  a biconvex straight ordering of a biconvex bipartite graph  can be found in linear time on the number of vertices of the graph. On the other hand, the algorithm in  \cite{enright2014}  is shown to run in time $O(n^{k^2-3k+4})$ time when a multichain ordering in decreasing ordering of degrees is given.  Observe that to get such ordering we have only to reorder the elements in the layers provided by the straight ordering, therefore it can be obtained in linear time.  All together gives an upper bound $O(n^{k^2-3k+4})$ on the complexity of {\sc Li $k$-Col} in the class of biconvex graphs.


\section{List $k$-Coloring of Convex Bipartite Graphs}\label{sec:convex}

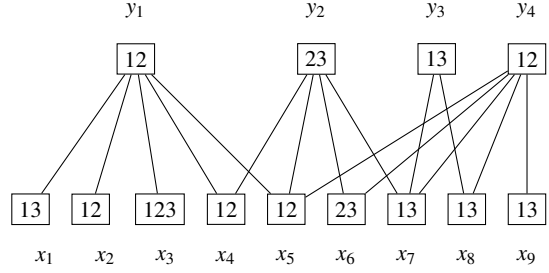
\begin{figure}[t]
\begin{center}
\begin{tikzpicture}[scale=0.8]

\draw (2.5,3.8) node {$y_1$};
\draw (5.5,3.8) node {$y_2$};
\draw (7.5,3.8) node {$y_3$};
\draw (9,3.8) node {$y_4$};

\node [draw] (y1) at (2.5,3) {12};
\node [draw] (y2) at (5.5,3) {23};
\node [draw] (y3) at (7.5,3) {13};
\node [draw] (y4) at (9,3)  {12};

\node[draw] (x1) at (0.75,0.5)  {13};
\node[draw] (x2) at (1.75,0.5)  {12};
\node[draw] (x3) at (2.9,0.5)  {123};
\node[draw] (x4) at (4,0.5)  {12};
\node[draw] (x5) at (5,0.5)  {12};
\node[draw] (x6) at (6,0.5)  {23};
\node[draw] (x7) at (7,0.5)  {13};
\node[draw] (x8) at (8,0.5)  {13};
\node[draw] (x9) at (9,0.5)  {13};

\draw (1,-0.3) node {$x_1$};
\draw (2,-0.3) node {$x_2$};
\draw (3,-0.3) node {$x_3$};
\draw (4,-0.3) node {$x_4$};
\draw (5,-0.3) node {$x_5$};
\draw (6,-0.3) node {$x_6$};
\draw (7,-0.3) node {$x_7$};
\draw (8,-0.3) node {$x_8$};
\draw (9,-0.3) node {$x_9$};

\path (y1) edge (x1);
\path (y1) edge (x2);
\path (y1) edge (x3);
\path (y1) edge (x4);
\path (y1) edge (x5);

\path (y2) edge (x4);
\path (y2) edge (x5);
\path (y2) edge (x6);
\path (y2) edge (x7);

\path (y3) edge (x7);
\path (y3) edge (x8);

\path (y4) edge (x5);
\path (y4) edge (x6);
\path (y4) edge (x7);
\path (y4) edge (x8);
\path (y4) edge (x9);

\end{tikzpicture}
\end{center}
\caption{A list assignment for  the convex bipartite graph  given in Figure \ref{fig:convex-bgraph}.  Labels inside vertex indicate the list of colors, from $\{1,2,3\}$, associated to the node.\label{fig:convex-listbgraph}}
\end{figure}

Let $G=(X\cup Y, E)$ be a connected  bipartite graph that is convex with respect to $X$.  Let $X=\{x_1,\ldots ,x_n\}$ be a convex ordering of $X$, that is, for each $y\in Y$ there are two positive integers  $a_y\leq b_y$ such that $N(y)=\{x_i\mid  a_y\le i\le b_y\}.$

Consider the set of integers $A=\{a_y\mid y\in Y\}$ and  $B=\{b_y\mid y\in Y\}$.
For the graph given in Fig. \ref{fig:convex-listbgraph}, $A=\{1,4,5,7\}$ and $B=\{5,7,8,9\}$.

We use the  set $B$ to direct  the dynamic programming  algorithm  and the elements in $A$ to determine the relevant information to be kept for the next step.
Assume that $B=\{b_1,\dots, b_\beta\}$ are sorted  so that $b_1<b_2< \dots <b_\beta$.  By connectivity of $G$, we have $b_\beta=n$.    For each $1\leq j\leq \beta$, let $X_j=\{x_i\in X\mid i\leq b_j\}$, $Y_j=\{y\in Y\mid b_y\leq b_j\}$, and  $Z_j=\{y\in Y\mid a_y\leq b_j < b_y\}$.  Define $G_j=G[X_j\cup Y_j]$. Observe that $G_\beta=G$, $Z_\beta=\emptyset$ and that $Z_j$ contains those vertices in $Y$  whose neighborhood starts before or at $b_j$ and ends after $b_j$. 
For example, for the graph given in Fig. \ref{fig:convex-listbgraph}, $b_2=7, X_2=\{x_1,x_2,...,x_7\}, Y_2=\{ y_1,y_2\}$ and $Z_2=\{y_3,y_4\}$.  For sake of simplicity we assume an initial point $b_0=0$, so that $G_0$ is the empty graph.

Let $K$ be a set of $k$ colors. Assume that each vertex $u$ in $G$ has an associated list $L(u)\subseteq K$.   We next define  the information that we want to compute for each $1\leq b_j \leq b_\beta$.  For each $1\leq j\leq \beta$, define $A(j) =\{a_y\mid y\in Z_j\}\cup \{b_j\}$. As before we assume that the elements in $A(j)=\{a_{1,j},\ldots ,a_{\alpha_j,j}\}$ are increasingly ordered, $a_{1,j}<a_{2,j}<\cdots <a_{\alpha_j,j}=b_j$. To simplify notation set  $a_{\alpha_j +1,j}=b_j+1$ to make sure that a higher value always exists. For the example in Fig. \ref{fig:convex-listbgraph}, for instance, $A(2)=\{5,6,7\}$. For the fictitious initial $b$ value $j=0$, we take $A(0)=\{0\}$

Fix $j$, $1\leq j\leq \beta$. For each   $1\leq i  \leq \alpha_j $ and  $S\subsetneq K$, $T_j(i,S)$ will hold value true whenever there is a valid list coloring of $G_j$ such that it uses no color in $S$ for the set $X_i^j=\{x_\ell\mid  a_{i,j}\leq \ell < a_{i+1,j}\}$.  Observe that we are not considering $K$ as a potential set  as not using any color is impossible. 

The Color Algorithm will compute those values in three steps. In going from $j-1$ to $j$, first it computes the values for the $x\in X_{j}$ that were not in $X_{j-1}$ combining this information with the relevant information computed in the previous step. Next, it incorporates the restriction from $y\in Y_{j}$ that were not in $Y_{j-1}$. Finally, it rearranges the information to keep only the values for the index in $A(j)$.

\begin{description}
\item[Color Algorithm:] Let $j$, $1\leq j\leq \beta$. Initially set $A(0)=\{0\}$, $b_0=0$ and set $T_0(0,S)$ to TRUE for any $S$. When $j\geq 1$ assume that the values of $T_{j-1}$ have already been computed. 
\item[Step 1]  Extending to new parts.

Let $A\rq{}(j) = A(j-1)  \cup \{a_y\mid b_{j-1}< a_y \leq b_j\} \cup \{b_j\}$. For $j>1$, by construction, those values lie before $b_{j}$ and some of them have no corresponding entries in $T_{j-1}$.  Assume that $A\rq{}(j)=\{a\rq{}_1,\dots,a\rq{}_{\gamma_j}\}$ increasingly ordered.  Let $a\rq{}_{\gamma_j+1} =b_j +1$.  We set $T_{j-1}(\ell,S)$ for $\alpha_{j-1} < \ell \leq \gamma_j$ and $S\subsetneq K$ to be true  whenever there is a valid list coloring of the set $X'(\ell)=\{x_i\mid  a\rq{}_\ell \leq i < a\rq{}_{\ell +1}\}$. For this, the algorithm checks whether $L(x)\setminus S\neq \emptyset$ for each $x\in X\rq{}(\ell)$. If this is the case, one can select a color not in $S$ and get a valid coloring. Accordingly we update the value of $T_{j-1}(\alpha_{j-1},S)$ so that it remains TRUE if it was already set to TRUE and the previous condition holds for the elements in $X'(\alpha_{j-1})$ 
\item[Step 2]  Incorporating $Y_j$.

For $y\in Y_j$ and $a_i\in [a_y,b_y]$, consider any entry $T_{j-1}(i,S)$ set to TRUE. If $S\cap L(y)=\emptyset$, the corresponding entry is changed to FALSE. 

Next, the values on $T_{j-1}$ are processed in increasing order of $x_i$: any entry $(i,S)$ holding value TRUE  will remain TRUE whenever there is an entry $(i-1, S\rq{})$ holding value TRUE with $S\subseteq S\rq{}$. By monotonicity, the property holds whenever $T_{j-1}(i-1,S)$ is TRUE. 

After processing $y$, if $T_{j-1}(l,S)$ holds true, for each  piece $[a_l', a_{l+1}')$ between $a_y$ and $b_y$, we can pick a common  color not in $S$ but in $L(y)$  to color $y$  that is compatible with some list coloring on the $X$ relevant parts that do not use $S$.

\item[Step 3] Compacting to get $T_j$. 

For each $1\leq i  \leq \alpha_j $ the set $X_i^j$  might contain several  subintervals on $X'(j)$, considered either in $T_{j-1}$ that will not be needed later on. We fusion those sets from left to right, adding one at a time,  setting  $T_j(i,S)$ to true whenever there are corresponding entries holding value true for sets $S_1$ and $S_2$ so that $S\subseteq S_1\cap S_2$.  
\end{description}

\begin{algorithm}
\caption{Color Algorithm}
\begin{algorithmic}
\REQUIRE $G=(X\cup Y, E)$ and
$\forall v\in X\cup Y, L(v) \subseteq \{1, \ldots, k\} $.
\ENSURE {\sc Li $k$-Col} for $G$ or decides that there is no such coloring.
\STATE $N(y) \leftarrow \{x_i\mid  a_y\le i\le b_y\}, A=\{a_y\mid y\in Y\}, B=\{b_y\mid y\in Y\}$.
\STATE $B \leftarrow $ ordered$(B)=\{b_1,\dots, b_\beta\}$
\STATE $A(0) \leftarrow \emptyset, b_0 = 0$
\FOR{$S \subsetneq K$}
\STATE  $T_0(0,S) \leftarrow TRUE$
\ENDFOR
\FOR{$1 \leq j \leq \beta$}
\STATE $X_j \leftarrow \{x_i\in X\mid i\leq b_j\}$ 
\STATE $Y_j \leftarrow \{y\in Y\mid b_y\leq b_j\}$ 
\STATE $Z_j \leftarrow \{y\in Y\mid a_y\leq b_j < b_y\}$ 
\STATE $A(j) \leftarrow \{a_{1, j}, \ldots, a_{\alpha_j,j}\}$ = ordered $\left( \{a_y\mid y\in Z_j\}\cup \{b_j\}\right)$
\STATE $A'(j) \leftarrow \{a'_1, \ldots a'_{\gamma_j}\}$ = ordered ($A(j-1)  \cup \{a_y\mid b_{j-1}< a_y \leq b_j\} \cup \{b_j\}$)
\STATE $a'_{\gamma_j+1} \leftarrow b_{j}+1$
\STATE $K \leftarrow \{1, \ldots, k\} $
\STATE Step 1: Expand the $T_{j-1}$ to the new values in $A'(j)$
\FOR{$\alpha_{j-1} \leq \ell \leq \gamma_j$}
\FOR{$S \subsetneq K$}
\STATE $X'(\ell) \leftarrow \{x_i | a'_\ell \leq i < a'_{\ell+1}\} $
\IF {$\ell\neq \alpha_{j-1}$}
\STATE $T_{j-1}(\ell,S)= TRUE$
\ENDIF
\FOR{$x_i \in X'(\ell)$}
\STATE $T_{j-1}(\ell, S) \leftarrow T_{j-1}(\ell, S)$ \AND ($L(x_i)\setminus S\neq \emptyset$)
\ENDFOR
\ENDFOR
\ENDFOR
\STATE Step 2A: Update the expanded $T_{j-1}$ table by considering the vertices in $Y_j$
\FOR{$y\in Y_j$}
\FOR{$i\in A'(j)$}
\WHILE{$ T_{j-1}(i, S)=TRUE$}
\IF{$S\cap L(y)=\emptyset$}
\STATE $ T_{j}(i, S) \leftarrow FALSE$
\ENDIF
\ENDWHILE
\ENDFOR
\ENDFOR
\STATE Step 2B: Second Update
\FOR{$y\in Y_j$}
\FOR{$i\in A'(j)$ with $a'_i\in (a_y,b_y]$ in increasing order}
\FOR{$S \subsetneq K$}
\STATE $T_{j-1}(i, S)\leftarrow T_{j-1}(i, S) \wedge T_{j-1}(i-1, S)$
\ENDFOR
\ENDFOR
\ENDFOR
\STATE Step 3: Computing  $T_j$ by compacting the expanded $T_{j-1}$
\STATE for $i\in A(j)$ let $f(i)$ its position in $A'(j)$
\FOR {$1 \leq i \leq \alpha_j$}
\STATE $ T_{j}(i, S) = T_{j-1}(f(i),S)$
\FOR {$f(i)<\ell<f(i+1)$}
\FOR{$S \subsetneq K$}
\IF{$\forall S_1, S_2 \subsetneq K$ with $S\subsetneq S_1 \cap S_2 \ \neg  (T_{j}(i, S_1) = TRUE\wedge T_{j-1}(\ell,S_2)= TRUE)$}
\STATE $ T_{j}(i, S) \leftarrow TRUE$
\ENDIF
\ENDFOR
\ENDFOR
\ENDFOR
\ENDFOR
\end{algorithmic}
\end{algorithm}

\newpage

To examplify the Color Algorithm consider the list assignment for the graph $G$ given in Fig. \ref{fig:convex-listbgraph}. In the Tables 1 and 2 below the value $T_0(i, S)$ is calculated for each subinterval $[a_{i, 1}, a_{i+1, 1})$ in $N(y_1)=[a_1, b_1]$ and for each non-empty proper subset $S$ of $K$.

\begin{table}[h!]
\begin{center}
\begin{tabular}{ | m{2cm} |  m{1cm}| m{1cm} |  m{1cm}| m{1cm}  |  m{1cm} | m{1cm} | } 
\hline
 &  $\{1\}$ & $\{2\}$ & $\{3\}$ & $\{1, 2\}$ & $\{1, 3\}$ & $\{2, 3\}$ \\ 
\hline
$\{x_1, x_2, x_3\}$  & T & T & T & F & F & T  \\ 
\hline
$\{x_4\}$ & T & T & T & F & T & T \\ 
\hline
$\{x_5\}$ & T & T & T & F & T & T \\ 
\hline
\end{tabular}
\end{center}
\caption{Truth values for the subintervals of $N(y_1)$ after Step 1 of the execution of the Color Algorithm.}
\label{table:1}
\end{table}

\begin{table}[h!]
\begin{center}
\begin{tabular}{ | m{2cm} |  m{1cm}| m{1cm} |  m{1cm}| m{1cm}  |  m{1cm} | m{1cm} | } 
\hline
 &  $\{1\}$ & $\{2\}$ & $\{3\}$ & $\{1, 2\}$ & $\{1, 3\}$ & $\{2, 3\}$ \\ 
\hline
$\{x_1, x_2, x_3\}$  & T & T & F & F & F & T  \\ 
\hline
$\{x_4\}$ & T & T & F & F & T & T \\ 
\hline
$\{x_5\}$ & T & T & F & F & T & T \\ 
\hline
\end{tabular}
\end{center}
\caption{Truth values for the subintervals of $N(y_1)$ after Step 2 of the execution of the Color Algorithm.}
\label{table:2}
\end{table}

\begin{lemma} Let $G=(X\cup Y, E)$ be a connected convex bipartite graph, $L$ be a color assignment for $G$. There is an $L$--coloring of  $G$   if and only if there is $S\subseteq K$ such that at the end of the execution of the Color Algorithm $T_\beta(\alpha_\beta,S)= true$.  
  \end{lemma}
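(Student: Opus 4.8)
The plan is to prove the two directions of the equivalence separately, and in each direction to set up an induction on $j$ that tracks exactly what the truth value $T_j(i,S)$ is meant to encode, namely: $T_j(i,S)$ is TRUE iff there is a proper $L$-coloring of $G_j$ that avoids every color of $S$ on the whole block $X_i^j=\{x_\ell\mid a_{i,j}\le \ell<a_{i+1,j}\}$. The statement of the lemma is then the special case $j=\beta$, $i=\alpha_\beta$ (where $X_{\alpha_\beta}^\beta$ is the last block ending at $b_\beta=n$), existentially quantified over $S\subsetneq K$; one checks separately that quantifying only over proper subsets $S$ loses nothing, since any coloring avoids, on a nonempty block, at least the $k-1$ colors it does not use there, so some proper $S$ works, and conversely $S=K$ is never needed because a block is nonempty.

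First I would prove the invariant by induction on $j$. The base case $j=0$ is immediate: $G_0$ is empty and $T_0(0,S)=$ TRUE for all $S$. For the inductive step, assume the invariant holds for $T_{j-1}$ and trace the three steps. Step 1 only enlarges the index set from $A(j-1)$ to $A'(j)$, subdividing or creating new $X$-blocks that correspond to the newly added vertices $x_i$ with $b_{j-1}<i\le b_j$; here $T_{j-1}(\ell,S)$ is set to TRUE exactly when every vertex $x_i$ in the new block $X'(\ell)$ has a color outside $S$, which is precisely the condition for a proper coloring of the (edgeless, so far) vertices of that block avoiding $S$ — and since no $Y_{j-1}$-vertex touches these new $x$'s, combining with the existing coloring of $G_{j-1}$ is unconstrained, so the invariant is maintained for the graph $G_{j-1}$ extended by the isolated new $X$-vertices. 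Step 2 incorporates the vertices $y\in Y_j\setminus Y_{j-1}$: the crucial point, and the place where convexity is used, is that $N(y)=\{x_\ell\mid a_y\le \ell\le b_y\}$ is an interval of the current $X$-order, and that the refinement points $a_y$ and $b_y$ are among the $A'(j)$ breakpoints, so $N(y)$ is a union of consecutive $X$-blocks. Step 2A enforces that a coloring avoiding $S$ on some block inside $N(y)$ can still place $y$: $y$ needs a color in $L(y)$ and distinct from the colors used on all of $N(y)$; the monotone propagation in Step 2B ($T(i,S)\leftarrow T(i,S)\wedge T(i-1,S)$ across the blocks of $N(y)$, together with the monotonicity remark that a TRUE at $(i-1,S)$ with $S\subseteq S'$ gives TRUE at $(i,S)$) guarantees that if $(i,S)$ survives then there is a single coloring of all blocks of $N(y)$ avoiding $S$, hence a color of $S\cap L(y)\neq\emptyset$ is free for $y$; one must also check the converse, that any coloring of $G_j$ avoiding $S$ on $X_i^j$ restricts to such a configuration. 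Step 3 merely re-buckets the $A'(j)$-blocks into the coarser $A(j)$-blocks by intersecting the forbidden sets: $T_j(i,S)$ is TRUE iff every $A'(j)$-sub-block of $X_i^j$ admits a coloring avoiding a set containing $S$, and these local colorings can be glued because the $X$-vertices in different sub-blocks are non-adjacent, while every $y\in Y_j$ whose neighborhood met $X_i^j$ has already been handled in Step 2. This establishes the invariant for $T_j$.

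With the invariant in hand, the lemma is almost immediate: $G_\beta=G$, and a block $X_{\alpha_\beta}^\beta$ is nonempty, so $G$ has an $L$-coloring iff it has one avoiding some proper $S\subsetneq K$ on that block iff $T_\beta(\alpha_\beta,S)=$ TRUE for some $S\subsetneq K$. I would write the ``only if'' direction of the invariant (coloring $\Rightarrow$ TRUE) by restricting a given coloring of $G_j$ to the relevant vertices and reading off, at each step, the forbidden set it induces on each block, checking that the algorithm never kills that entry; and the ``if'' direction (TRUE $\Rightarrow$ coloring) by assembling a coloring block by block, using at each newly incorporated $y$ the guarantee that its neighborhood's blocks carry a common avoided set leaving $L(y)$ a free color.

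The main obstacle, I expect, is Step 2 and the bookkeeping around it — specifically verifying that the two-phase update (2A then 2B) correctly captures ``there is one coloring of $N(y)$, not merely one per block, that simultaneously avoids $S$ and leaves room for $y$,'' and that this is consistent across \emph{several} new vertices $y,y'\in Y_j\setminus Y_{j-1}$ whose neighborhoods overlap. The convexity of the $X$-ordering is what makes each $N(y)$ a contiguous run of blocks and lets the left-to-right propagation in 2B do its job, but one has to argue carefully that processing the $y$'s in sequence does not create a spurious TRUE (a coloring good for $y$ but incompatible with $y'$) — this is where the monotonicity property $T(i,S)$ TRUE with $S\subseteq S'$ $\Rightarrow$ $T(i,S')$-style reasoning, and the fact that all these vertices see a common right-hand block $X_{\alpha_j}^j$ ending at $b_j$, have to be invoked. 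The remaining steps are routine gluing arguments over non-adjacent vertex sets.
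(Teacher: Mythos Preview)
Your approach is essentially the same as the paper's: both argue by induction on $j$ that $T_j(i,S)$ encodes exactly the existence of an $L$-coloring of $G_j$ avoiding $S$ on the block $X_i^j$, tracing Steps~1--3 to maintain this invariant, and then read off the lemma at $j=\beta$. Your proposal is in fact more detailed than the paper's own proof---in particular your identification of the delicate point in Step~2 (handling several overlapping $y\in Y_j\setminus Y_{j-1}$) is something the paper passes over with a one-line summary---but the skeleton and the key ideas coincide.
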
 
\begin{proof}
Assume that $G$ admits a list coloring.  Let $c$ be  an $L$-coloring of $G$.  For $U\subseteq X$ let $S_{U}=K\setminus c(X)$.  Observe $L$ does not use any color in $S_U$ on $U$ and  furthermore, for any $y\in Y$ so that $N(y)\cap U\neq \emptyset$, $L(y)\cap S_U\neq \emptyset$. Using this fact it follows that the entries in the tables for the corresponding sets get the value true and at the end of the algorithm   $T(\beta,\{c(x_n)\})$ will be true.

Conversely, we  can prove that the Color Algorithm correctly computes the values of $T_j$ for $1\leq j\leq \beta$.  The proof is by induction. Observe that for $j=1$ the table $R$ provides the right indices and the initialization step provides the correct values for the table on an empty graph. By induction hypothesis, we assume  that the values of $T_j$ are correctly computed. Step 1 guarantees  that the desired coloring exists when adding only the $X$ part on $G_j$ to $G_{j-1}$.  Step 2, has two parts. The first one guarantees that only those entries with sets that are compatible with the list of the vertices in $Y_j$ are still alive. The second one ensures that when combining two consecutive pieces having a common neighborhood on $Y_j$ a common set of colors (a subset) is available to color these vertices.  Finally Step 3, merge tables  for pieces that have the same $Y$ neighborhood outside $G_j$, again we need to maintain a common set of colors free for potential use on this neighbors.    
\end{proof}

Finally observe that all the running time of the color algorithm is polynomial in $|G|$ and in $2^k$. Furthemore, the  {\sc $k$-PrExt} can be polynomially  reduced to {\sc Li $k$-Col}. Therefore we get our main result.  
\begin{theorem}\label{theorem:convex}
For $k\geq 3$, {\sc Li $k$-Col} and {\sc $k$-PrExt} on convex bipartite graphs can be solved in polynomial time.  
\end{theorem}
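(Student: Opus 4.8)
The plan is to combine the dynamic-programming algorithm described above (the Color Algorithm) with its correctness lemma, and then to bound its running time. The correctness statement is exactly the preceding lemma: $G$ admits an $L$-coloring if and only if $T_\beta(\alpha_\beta,S)=\textrm{true}$ for some $S\subsetneq K$. So the remaining work for {\sc Li $k$-Col} is entirely a running-time analysis; for {\sc $k$-PrExt} we additionally note the trivial reduction. First I would recall that a convex ordering of $X$ (and the associated numbers $a_y,b_y$) can be computed in linear time on convex bipartite graphs by the recognition algorithms cited in Section~\ref{sec:prelim} (Spinrad~\cite{spinrad1987}, Nussbaum~et~al.~\cite{nussbaum2010}); if $G$ is disconnected, run the algorithm on each connected component separately and answer ``yes'' iff every component is $L$-colorable, so we may assume $G$ connected.

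Next I would set up the size parameters: $n=|X|$, $\beta=|B|\le |Y|\le n$, and for each $j$ the index set $A(j)$ has $\alpha_j\le |Z_j|+1\le |Y|+1$ entries, while the expanded set $A'(j)$ used in Step~1 has $\gamma_j\le |Y|+1$ entries as well. The table $T_j$ has at most $(\,|Y|+1\,)\cdot(2^k-1)$ cells, each Boolean. I would then walk through the three steps of the outer loop and bound the cost of each. Step~1 touches each new $X$-interval once per subset $S$, checking $L(x)\setminus S\neq\emptyset$ for the (at most $n$) vertices in that interval, so it costs $O(n\cdot 2^k)$ per value of $j$. Step~2A scans, for each $y\in Y_j$, the entries of $A'(j)$ lying in $[a_y,b_y]$ and each subset $S$, testing $S\cap L(y)=\emptyset$, at cost $O(|Y|\cdot|Y|\cdot 2^k)$ per $j$; Step~2B performs the monotone ``merge with the previous piece'' pass, again $O(|Y|^2\cdot 2^k)$ per $j$. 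Step~3 fuses consecutive pieces inside each $X_i^j$; the inner double-loop over pairs $S_1,S_2\subsetneq K$ with $S\subsetneq S_1\cap S_2$ costs $O(2^{3k})$ per fused pair and there are $O(|Y|)$ pieces, so $O(|Y|\cdot 2^{3k})$ per $j$. Multiplying by the $\beta\le |Y|$ iterations of the outer loop, the total is $O\!\bigl(|Y|^2\cdot(n+|Y|)\cdot 2^{3k}\bigr)$, which is polynomial in $|G|$ for every fixed $k$ (and, more precisely, polynomial in $|G|$ and in $2^k$, as already remarked). This establishes the claim for {\sc Li $k$-Col}.

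For {\sc $k$-PrExt} I would invoke the standard reduction noted in Section~\ref{sec:intro}: an instance of {\sc $k$-PrExt} with precolored set $W$ is the instance of {\sc Li $k$-Col} in which $L(v)=\{c(v)\}$ for $v\in W$ and $L(v)=\{1,\dots,k\}$ otherwise; this preserves the graph, hence convexity, and the Color Algorithm solves it in the same time bound. Combining the two cases gives the theorem.

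The main obstacle in carrying this out rigorously is not the time bound but making sure the running-time argument is clean given the somewhat informal description of Step~3: one must verify that the number of ``pieces'' to be fused inside each $X_i^j$ is $O(|Y|)$ (it is bounded by $|A'(j)\setminus A(j)|+1$) so that the $2^{3k}$ factor from the pairwise subset search does not get multiplied by anything worse than $|Y|$. A secondary point to state carefully is that the disconnected case reduces to the connected one without loss, so that the hypothesis ``$b_\beta=n$'' used throughout the algorithm's description is legitimate. Once these two bookkeeping points are pinned down, the theorem follows immediately from the correctness lemma and the cited linear-time preprocessing.
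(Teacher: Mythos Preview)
Your proposal is correct and follows exactly the same approach as the paper: invoke the correctness lemma for the Color Algorithm, observe that its running time is polynomial in $|G|$ and $2^k$, and then reduce {\sc $k$-PrExt} to {\sc Li $k$-Col}. In fact your write-up is considerably more detailed than the paper's, which simply asserts the polynomial bound without the step-by-step accounting you give; your extra care about Step~3 and the disconnected case is welcome but not something the paper itself spells out.
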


The color algorithm can be modified to solve the {\sc Li $H$-Col} on convex bipartite graphs.  For this, the algorithm keeps track instead of the unused color on the $X$ part of the  used ones. For doing that, we have to consider some longer subdivision of the intervals in the $X$ part.  Step 2 will  check that at least one of the colors in the list of $y$ is connected to all the used colors in the $X$ part.  Step 3 is also modified as the global set of used colors will be the union.  

\begin{theorem}
For any $H$, {\sc Li $H$-Col} on convex bipartite graphs can be solved in polynomial time.  
\end{theorem}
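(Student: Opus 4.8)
The plan is to adapt the dynamic-programming algorithm of Theorem~\ref{theorem:convex} from the $k$-color setting to the homomorphism (list $H$-coloring) setting, so I will follow the same three-phase sweep over the sorted right-endpoints $b_1 < b_2 < \cdots < b_\beta$ but change the \emph{state} carried along each interval. In the {\sc Li $k$-Col} algorithm the table $T_j(i,S)$ records, for the block $X_i^j$ of consecutive $X$-vertices, a set $S$ of colors that is \emph{forbidden} on every vertex of that block; for {\sc Li $H$-Col} a single forbidden set is no longer enough, since whether a vertex $y\in Y$ can be placed at some $h\in V(H)$ depends on the \emph{set of images actually used} on $N(y)\cap X$, not merely on which images are avoided. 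Accordingly I would let the state be a subset $U\subseteq V(H)$ recording the set of $H$-vertices used by the partial homomorphism on the block, and I would refine the interval partition of $X$: instead of cutting $X$ only at the points of $A(j)$, I cut it at every distinct $a_y$ and $b_y$, so that within each elementary piece the neighbourhood in $Y_j$ that still ``sees'' the piece is constant; this keeps the number of pieces polynomial ($O(n)$) and the number of states bounded by $2^{|V(H)|}$, a constant once $H$ is fixed.

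The three steps then become: in \textbf{Step 1} (extending to new $X$-vertices), for a freshly added block $X'(\ell)$ and a candidate used-set $U$, set the entry true iff one can choose, for each $x\in X'(\ell)$, an image in $L(x)$ so that the multiset of chosen images has underlying set exactly $U$ (equivalently: $U\subseteq\bigcup_{x}L(x)$ and $L(x)\cap U\neq\emptyset$ for all $x$, i.e.\ $U$ is a ``transversal-compatible'' set, which is checkable in polynomial time). In \textbf{Step 2} (incorporating the $y\in Y_j\setminus Y_{j-1}$), for each such $y$ with $N(y)$ spanning blocks $a_y,\dots,b_y$, we first combine the states of those consecutive blocks into a single used-set $U=\bigcup U_\ell$ (this replaces the ``$S\subseteq S_1\cap S_2$'' fusion by a union, propagated left to right exactly as in the original Step~2B), and then we keep the combined entry alive iff there exists $h\in L(y)$ adjacent in $H$ to \emph{every} vertex of $U$; this is the edge-of-$H$ check that the sketch after Theorem~\ref{theorem:convex} describes. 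In \textbf{Step 3} (compacting to $T_j$), we again fuse the elementary pieces lying inside one block of $A(j)$ by taking unions of used-sets, since the only information the future needs about a piece is the set of images it has committed to (those are the images the still-pending $Z_j$-neighbours will have to be compatible with). The correctness proof is the verbatim analogue of the lemma preceding Theorem~\ref{theorem:convex}: one direction reads off, from a given homomorphism $c$, the used-sets $U=c(\text{block})$ and checks by induction on $j$ that the corresponding entries are true; the converse argues by induction on $j$ that Steps~1--3 preserve exactly the invariant ``some partial homomorphism of $G_j$ realises these used-sets on the current blocks,'' so that a true entry $T_\beta(\alpha_\beta,U)$ yields an $H$-colouring of $G=G_\beta$, and conversely an $H$-colouring forces such an entry.

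For the running time, the outer loop runs $\beta\le n$ times; inside, the number of elementary pieces is $O(n)$, the number of states per piece is $2^{|V(H)|}$, and each of Steps~1--3 does a constant (in $n$) amount of work per (piece, state) pair — Step~1 a scan of a block, Step~2 a scan over the $O(n)$ vertices of $Y_j$ times a union of at most $O(n)$ states each of size $|V(H)|$, Step~3 a left-to-right fusion over $O(n)$ pieces. Hence the total is polynomial in $|G|$ and in $2^{|V(H)|}$, i.e.\ polynomial for fixed $H$, which gives the theorem.

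The step I expect to be the genuine obstacle is \textbf{Step 2}: one must be careful that ``union of the used-sets of the blocks under $N(y)$'' really is the set of images seen by $y$, which requires that the block partition be fine enough that \emph{every} block meeting $N(y)$ is entirely inside $N(y)$ — this is exactly why I refine the cut points to include all the $a_y$ and $b_y$, and it is the only place where the convex (consecutive-neighbourhood) structure is used. A second subtlety, also localised in Step~2, is that when several $y$'s share overlapping neighbourhoods the union-propagation must be done in the right order and the state space of ``used-sets of a maximal run of blocks'' could a priori blow up; but since only the final union matters for the $H$-adjacency check, one can collapse it to a single subset of $V(H)$ at each stage, keeping the state space at $2^{|V(H)|}$. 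Everything else is a routine transcription of the {\sc Li $k$-Col} argument with ``forbidden set / intersection'' replaced by ``used set / union''.
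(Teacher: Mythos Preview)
Your proposal follows exactly the modification the paper itself sketches in the paragraph preceding the theorem: replace the forbidden-color state by the set of \emph{used} $H$-vertices on each block, refine the $X$-partition (the paper's ``longer subdivision''), test in Step~2 that some $h\in L(y)$ is $H$-adjacent to every used vertex, and fuse by union rather than intersection in Step~3. One small slip that does not affect the argument: the parenthetical ``equivalently'' in your Step~1 is false (a single vertex $x$ with $L(x)=\{a,b\}$ satisfies both of your conditions for $U=\{a,b\}$ yet cannot realise used-set exactly $\{a,b\}$), but realisability of a given $U$ on a block is still checkable in polynomial time for fixed $H$, so the conclusion stands.
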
 

\section{Conclusions}\label{sec:final}

In this paper the problem posed by Huang et al. \cite{huang2015}  on the computational complexity of  the {\sc Li $3$-Col} and {\sc $3$-PrExt} on chordal bipartite graphs is addressed. A partial answer to a general version of this question is given by increasing the subclasses of chordal bipartite graphs for which polynomial time algorithms for the {\sc Li $k$-Col} are known to biconvex bipartite graphs and convex bipartite graphs. Note that the later class includes  convex bipartite graphs with bounded degree, complete bipartite graphs which have unbounded treewidth, as well as graphs with unbounded cliquewidth. Interestingly enough the second result can also be extended, with a slight  modification, to solve {\sc Li $H$-Col} for the same graph class. The paper includes another result of independent interest: any connected biconvex bipartite graph admit a multichain ordering.

On the other hand, chordal bipartite graphs form a much larger graph class. Using the terminology of \cite{scheinerman1994} it is a superfactorial graph class whereas convex bipartite graphs is a factorial graph class. Although {\sc Li $k$-Col} is hard for $k\geq 4$ when restricted to chordal bipartite graphs, finding the computational complexity of {\sc Li $3$-Col} for chordal bipartite graphs is the next natural open question.

\section*{Acknowledgments}

J. D\'{\i}az and  M. Serna are partially supported by funds from MINECO and EU FEDER under 
  grant TIN2017-86727-C2-1-R)  AGAUR project ALBCOM (2017-SGR-786) 
\"Oznur Ya\c{s}ar Diner is partially supported by the Scientific and Technological Research Council of Turkey (TUBITAK) BIDEB 2219 [grant number 1059B191802095] and by the Kadir Has University BAP [grant number 2018-BAP-08]. Oriol Serra is  supported by the Spanish Ministry of Science under project MTM2017-82166-P.


\begin{thebibliography}{19}














 \bibitem{stewart2000} Abbas, N., Stewart, L. K. 2000. "Biconvex graphs: Ordering and algorithms." Discrete Applied Mathematics, 103(1--3) 1--19. 




\bibitem{biro1992}Biro, M., Hujter, M., Tuza, Zs. 1992.  Precoloring extension. I. Interval graphs, Discrete Mathematics 100(1-3), 267--279.

\bibitem{bonomo2018} Bonomo, F., Chudnovsky, M., Maceli, P. Oliver Schaudt, Maya Stein, Mingxian Zhong. Three-Coloring and List Three-Coloring of Graphs Without Induced Paths on Seven Vertices. Combinatorica (2018) 38: 779. 





\bibitem{brandstadt}Brandst\"{a}dt, A., Le, V. B., Spinrad, J. P., 1999. Graph classes: a survey, Society for Industrial and Applied Mathematics. 

\bibitem{brandstadt2003} Brandst\"{a}dt, A. and Lozin, V. V. 2003. On the linear structure and clique-width of bipartite permutation graphs. Ars Combinatoria 67, 273--281.


\bibitem{broersma2013} H.J. Broersma, F.V. Fomin, P.A. Golovach and D. Paulusma, Three complexity results on coloring $P_k$-freegraphs, European Journal of Combinatorics 34 (2013) 609–619.





\bibitem{chlebik} Chlebik, M., Chlebikova, J. 2006. Hard coloring problems in low degree planar bipartite graphs. Discrete Applied Mathematics 154, 1960--1965.



\bibitem{courcelle}  Courcelle, B.; Makowsky, J. A.; Rotics, U. (2000), "Linear time solvable optimization problems on graphs of bounded clique-width", Theory of Computing Systems, 33 (2): 125--150.



\bibitem{diaz2002} J. D\'{i}az, M. Serna, and D. M Thilikos, Counting H-colorings of partial k-trees, Theoret.
Comput. Sci., 281 (2002), pp. 291-–309.

\bibitem{diestel}Diestel, R. 2017. Graph Theory. Graduate Texts in Mathematics 173(5) Heidelberg: Springer-Verlag.







\bibitem{enright2014} Enright, J. Stewart, T., Tardos, G. 2014. On List Coloring and List Homomorphism of Permutation and Interval Graphs. SIAM Journal on Discrete Mathematics 28(4), 1675-1685.


\bibitem{erdos1979} Erd\"{o}s, P., Rubin, A. L., Taylor, H. 1979. Choosability in graphs, Proceedings of the West Coast Conference on Combinatorics, Graph Theory and Computing, Humboldt State University, Utilitas Mathematica 125-157.






\bibitem{garey1979}Garey, M. R. , Johnson, D. S.  Computers and Intractability: A Guide to the Theory of NP-Completeness.
W. H. Freeman  New York, NY ,1979.


\bibitem{golovach2014} Golovach, P. A., Paulusma, D. 2014. List coloring in the absence of two subgraphs,
Discrete Applied Mathematics 166, 123--130.

\bibitem{golovach2016}Golovach, P., Johnson, M., Paulusma, D. ve Song, J. 2016.  A Survey on the Computational Complexity of Colouring Graphs with Forbidden Subgraphs, Journal of Graph Theory, 84(4), 331--363.

\bibitem{golumbic1978} Golumbic, M. C., Goss, C. F. 1978. "Perfect elimination and chordal bipartite graphs", Journal of Graph Theory 2, 155--163.

\bibitem{grostchel}Gr\"{o}stchel, M., Lovasz, L., Schrijver, A. 1984.  Polynomial algorithms for perfect graphs, Ann. Discret. Math. 21, 325--356.






\bibitem{hoang2010} C. Hoàng, M. Kamiński, V. Lozin, J. Sawada, and X. Shu, Deciding k-colorability of $P_5$ -free
graphs in polynomial time, Algorithmica, 57 (2010), pp. 74–-81.

\bibitem{huang2015} Huang, S., Johnson H., Paulusma, D. 2015. Narrowing the complexity gap for coloring $(C_s,P_t)$--Free Graphs, The Computer Journal, 58(11), 3074-3088.


\bibitem{hujter1996}Hujter, M., Tuza, Zs. 1996. Precoloring extension. III. Classes of perfect graphs, Combinatorics, Probability and Computing 5, 35--56.





\bibitem{jansen1997} Jansen, K., Scheffler, P. 1997. Generalized coloring for tree-like graphs, Discrete Appl Math 75, 135--155.

\bibitem{karp1972} Karp, R. M., 1972. Reducibility among combinatorial problems, In: Compexity of Computer Computations, Miller, J. D. B. R. E., and Thatcher, J. W. Editors, Plenum Press, New York, 85-103.


\bibitem{kloks1993} Kloks, T. and Kratsch, D., 1995. "Treewidth of chordal bipartite graphs" Journal of Algorithms 19, 266--281.

\bibitem{kratochvil1993}Kratochvil, J. 1993. Precoloring extension with Fixed color bound,Acta Mathematica Universitatis Comenianae 62, 139--153.

\bibitem{kratochvil1994}Kratochvil, J., Tsuza, Z. 1994. Algorithmic complexity of list colorings, Discrete Applied Mathematics 50, 297--302.

\bibitem{kubale1992}Kubale, M. 1992.  Some results concerning the complexity of restricted colorings of graphs, Discrete Applied Mathematics 36, 35--46.






\bibitem{lovasz1973}Lovasz, L. 1973.  Coverings and colorings of hypergraphs, Proceedings of the 4th Southeastern Conference on Combinatorics, Graph Theory and Computing, Utilitas Mathematica, 3--12.

\bibitem{lozin} Lozin, V., Rautenbach, D. Chordal bipartite graphs of bounded tree- and clique-width. Discrete Mathematics 283 (2004) 151--158





\bibitem{nesetril} Hell, P.  and JNe\v set\v ril, J. , Graphs and Homomorphisms, Oxford University Press, Oxford,
 2004.




\bibitem{nussbaum2010}  Nussbaum, D., Pu, S., Sack, J.-R., Uno, T., Zarrabi-Zadeh, H. 2010. Finding Maximum Edge Bicliques in Convex Bipartite Graphs" Algorithmica 64(2), 140--149.


\bibitem{paulasma2016}Paulusma, D. 2016.  Open Problems on Graph Coloring for Special Graph Classes. Lecture Notes in Computer Science 9224, 16--30.

\bibitem{paige1987}Paige, Robert; Tarjan, Robert E.
Three partition refinement algorithms.
SIAM J. Comput. 16, 973-989 (1987). 

\bibitem{schaefer1978} Schaefer, T. J. 1978.
"The complexity of satisfiability problems"
Proceedings of the 10th Annual ACM Symposium on Theory of Computing.

\bibitem{scheinerman1994} Scheinerman, E. R., Zito, J. 1994. "On the size of hereditary classes of graphs", J. Combin. Theory Ser. B 61, 16--39.



\bibitem{spinrad1987}Spinrad, J. P., Brandst\"{a}dt, A. and Stewart, L. 1987.  Bipartite Permutation Graphs, Discrete Applied Mathematics, 18. 279--292.

\bibitem{spinrad}Spinrad, J. 2003. Efficient Graph Representations. Fields Institute Monographs 19, American Mathematical Society.




\bibitem{tuza1997}Tuza, Zs. 1997. Graph Coloring with Local constrains - A survey, Discussioners Mathematicae, Graph Theory, 17, 161--228.


\bibitem{vizing1976}Vizing, V. G. 1976. Coloring the vertices of a graph in prescribed colors, Dikreti Analiz. 29, Metody Diskret. Anal. V. Teorii Kodov i Shem 101, 3--10.








\bibitem{yannakakis1982} Yannakakis, M. 1982. The complexity of the partial orderdimension problem. SIAM Journal on Algebraic and Discrete Methods, 3:351--358.












\end{thebibliography}
\end{document}